\DeclareMathOperator{\sinc}{sinc}
\def\ket#1{\left| #1 \right\rangle}
\def\bra#1{\left\langle #1 \right|}
\newcommand{\beq}{\begin{equation}}
\newcommand{\eeq}{\end{equation}}
\newtheorem{lem}{Lemma}
\newtheorem{thm}{Theorem}
\newtheorem{conj}{Conjecture}
\begin{document}
\title{Quantum supremacy and high-dimensional integration}
\author{Juan Miguel Arrazola}
\affiliation{Xanadu, 372 Richmond Street W, Toronto, Ontario M5V 1X6, Canada}
\author{Patrick Rebentrost}
\affiliation{Xanadu, 372 Richmond Street W, Toronto, Ontario M5V 1X6, Canada}
\author{Christian Weedbrook}
\affiliation{Xanadu, 372 Richmond Street W, Toronto, Ontario M5V 1X6, Canada}
\begin{abstract}
We establish a connection between continuous-variable quantum computing and high-dimensional integration by showing that the outcome probabilities of continuous-variable instantaneous quantum polynomial (CV-IQP) circuits are given by integrals of oscillating functions in large dimensions. We prove two results related to the classical hardness of evaluating these integrals: (i) we show that there exist circuits such that these integrals are approximations of a weighted sum of \#P-hard problems and (ii) we prove that calculating these integrals is as hard as calculating integrals of arbitrary bounded functions. We then leverage these results to show that, given a plausible conjecture about the hardness of computing the integrals, approximate sampling from CV-IQP circuits cannot be done in polynomial time on a classical computer unless the polynomial hierarchy collapses to the third level. Our results hold even in the presence of finite squeezing and limited measurement precision, without an explicit need for fault-tolerance.
\end{abstract}
\maketitle

\textit{Introduction.---} Quantum computing is an imminent quantum technology. At the core of the efforts to build practical quantum computers is the belief that they can efficiently solve problems in cryptanalysis \cite{shor1994algorithms,proos2003shor,boneh1995quantum} and quantum simulation \cite{lloyd1996universal,lanyon2011universal,houck2012chip,
cirac2012goals,georgescu2014quantum,bernien2017probing,
zhang2017observation,berry2007efficient,berry2017exponential} for which all classical algorithms would take a forbiddingly large amount of time. Consequently, it has become vital to convincingly demonstrate that quantum computers are capable of performing tasks that are intractable for classical processors. This milestone is commonly referred to as ``quantum supremacy" \cite{harrow2017quantum}, which would result in a refutation of the Extended Church-Turing thesis. Recent efforts towards a near-term demonstration of quantum supremacy have focused on the problem of sampling from the output distribution of restricted models of quantum computing. Examples include Boson Sampling \cite{aaronson2011computational,hamilton2016gaussian}, random quantum circuits \cite{boixo2016characterizing,aaronson2016complexity}, the quantum approximate optimization algorithm \cite{farhi2014quantum,farhi2016quantum}, random Ising models \cite{gao2017quantum,bermejo2017architectures}, measurement-based quantum computing \cite{miller2017quantum} and instantaneous quantum polynomial (IQP) circuits \cite{bremner2010classical,bremner2016average,bremner2017achieving}. 

Continuous-variable (CV) quantum computing is a universal model of quantum computing where the fundamental units of information can take a continuum of possible values \cite{lloyd1999quantum,braunstein2005quantum}. This platform is ideally suited for measurement-based optical quantum computing, which provides many potential advantages compared to quantum computers manipulating qubits \cite{gu2009quantum,menicucci2006universal}. Progress in characterizing quantum supremacy for CV quantum computers has recently been addressed, notably in Ref. \cite{douce2017continuous}, where it was shown that any classical algorithm that can exactly sample from any fault-tolerant CV-IQP circuit must take exponential time unless the polynomial hierarchy collapses to third level. Nevertheless, several important questions remain open. For instance, it is crucial to determine whether the hardness result remains even for approximate simulation of the circuits and whether fault-tolerance is needed in CV-IQP circuits to demonstrate quantum supremacy. It is also of great interest to understand if CV-IQP circuits can be related to problems of practical significance.

In this work, we connect the hardness of sampling from CV-IQP circuits to the difficulty of computing integrals of oscillating functions in a large number of dimensions. High-dimensional integration is an important and widely-studied problem in many areas of physics, chemistry, finance, and statistics. Although several techniques are known for efficiently calculating one-dimensional integrals, extending them to many variables suffers from the so-called ``curse of dimensionality". This is what makes one-dimensional strategies ineffective for the high-dimensional case, where general integrals require exponential resources to be evaluated \cite{stroud1971approximate,sloan1998quasi,novak2009approximation,novak2008tractability1,novak2010tractability2, curse}. In fact, it has already been shown that certain integrals arising in the description of Boson Sampling circuits are \#P-hard to calculate \cite{rohde2016quantum}. Thus, although they are not the preferred problem of computer scientists, integrals of functions over many variables have been extensively studied with no known efficient algorithms known for arbitrary integrals.

We prove that there exist CV-IQP circuits for which the corresponding integrals are an arbitrarily good approximation of a weighted sum of independent \#P-hard problems. Furthermore, we show that evaluating these integrals is as hard as for arbitrary bounded functions, which are known to require exponential time to approximate on a classical computer in a worst-case setting\cite{curse,novak2008tractability1,novak2010tractability2}. We then prove that if these integrals are \#P-hard to approximate on average, then any classical algorithm for approximate sampling from the output of CV-IQP circuits must run in exponential time unless the polynomial hierarchy collapses to third level. We conclude by showing that our results hold even if the input states are finitely squeezed and the measurements have finite precision, without an explicit need for fault-tolerance.

\textit{CV-IQP circuits.---} Continuous-variable instantaneous quantum polynomial (CV-IQP) circuits are a subclass of circuits on a continuous-variable quantum computer which can be decomposed as follows \cite{douce2017continuous}: (i) Inputs states are momentum-squeezed vacuum states, (ii) Unitary transformations are diagonal in the position quadrature (iii) measurements are homodyne momentum measurements. A generic CV-IQP circuit is illustrated in Fig. \ref{Fig: Circuit}.

We denote the position eigenstates of $n$ optical qumodes as $\ket{q}=\ket{q_1}\ket{q_2}\ldots\ket{q_n}$ with $q\in\mathbb{R}^n$ and consider circuits with diagonal gates $C_f$ acting on position eigenstates as $C_f\ket{q}=e^{i f(q)}\ket{q}$, where $f:\mathbb{R}^n\rightarrow\mathbb{R}$ is a polynomial. In the ideal case, the probability amplitude of obtaining an outcome $s=(s_1,s_2,\ldots,s_n)$ is given by
\begin{align}\label{Amplitude}
A_f(s)&=\frac{1}{(2\pi)^n}\int_{\mathbb{R}^n}e^{i f(q)}e^{-i s\cdot q}dq^n,
\end{align}
where $s\cdot q=\sum_k s_kq_k$ is the inner product. The probability of outcome $s$ is $P(s)=|A_f(s)|^2$. We refer to this expression as a \emph{CV-IQP integral.} Note that $A_f(s)$ is the Fourier transform of $e^{if(q)}$ and therefore the CV-IQP circuit is sampling from a distribution induced by this Fourier transform. Based on the vast literature on high-dimensional integration, it is reasonable to expect CV-IQP integrals to be intractable to approximate for general circuits where $f(q)$ is a high-degree polynomial. Note that for polynomials of degree 2, the circuits can be efficiently simulated classically \cite{bartlett2002efficient}. In the following, we formalize this intuition by proving two results regarding the computational complexity of CV-IQP integrals. 

\begin{center}
\begin{figure}[t!]
\includegraphics[width= \columnwidth]{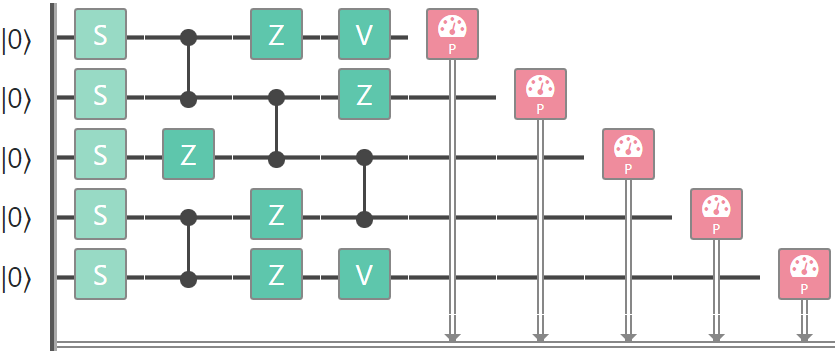}
\caption{Schematic representation of a CV-IQP circuit acting on five qumodes. Vacuum states are squeezed in the momentum quadrature by the action of squeezers $S$. A diagonal unitary transformation $C_f$ is applied, which in this case is decomposed in terms of Z gates, controlled-Z gates and cubic phase gates $V$. Finally, a momentum homodyne measurement is performed on each individual qumode. }\label{Fig: Circuit}
\end{figure}
\end{center}

\textit{CV-IQP integrals as weighted sums of \#P-hard problems.---} We begin by making a series of appropriate approximations to the integral of Eq. \eqref{Amplitude}.  The integrand is highly oscillatory for large values of $q$ leading to contributions that average to zero. Therefore, we can restrict the integration to the hypercube $D_L=[-L,L]^n$ for some appropriate $L$. Furthermore, we can approximate the integrand by a sum of indicator functions and replace the integral by a Riemann sum. As shown in the Supplemental Material, this leads to the expression 
\begin{align}\label{P_{f,N}}
A_f(s)&=\frac{\Delta q^n}{(2\pi)^n} \sum_{b=0}^{N-1}\sum_{q} e^{i \phi_b}\Theta_b^{f}(q)+\epsilon,
\end{align}
where $q$ takes on a finite amount of values with a spacing $\Delta q$, $\epsilon$ is an arbitrarily small approximation error, and we have defined the angles $\phi_b=\frac{ 2\pi b }{N}$ as well as the indicator functions
\beq
\Theta_b^{f}(q)=\begin{cases}
1& \textrm{ if } f(q)-s\cdot q\in [\phi_b,\phi_{b+1})\mod 2\pi\\
0 & \textrm{ otherwise.}
\end{cases}
\eeq
Since there are only finitely many of them, we can associate each vector $q$ with an $m$-bit string $x$ and view each indicator function as a Boolean function $\Theta_b^{f}(x)$. In this case, we can write 
\begin{align}
A_f(s)&=\frac{\Delta q^n}{(2\pi)^n} \sum_{b=0}^{N-1}e^{i\phi_b}\Omega^f_b+\epsilon,\label{A_f(s)}\\
\Omega^{f}_b:&=\sum_{x\in\{0,1\}^m}\Theta_b^{f}(x).
\end{align}
By definition, calculating $\Omega_b$ is in the complexity class \#P. In fact, the problem of computing the approximation in Eq. \eqref{A_f(s)} belongs to a class that is the generalization of the class GapP (the closure of \#P under substraction) where we allow $N$ different complex phases $e^{i\phi_b}$ instead of only $e^{i0}=1$ and $e^{i\pi}=-1$. Such a class may be of independent interest in complexity theory.

Since we have made no assumptions about $f(q)$, we are free to choose them such that the corresponding quantities $\Omega^{f}_b$ are as hard to compute as possible. Let $\Phi_0,\Phi_1,\ldots,\Phi_{N-1}$ be Boolean functions $\Phi_b:\{0,1\}^{m-l}\rightarrow \{0,1\}$ such that computing $\sum_{y}\Phi_b(y)$ is a \#P-hard problem for all $b$. Denote by $x=(b,y)$ an $m$-bit string where the first $l$ bits are a binary representation of $b$ and the remaining $m-l$ are an arbitrary string $y$. Define the function 
\beq
f(x)=f(b,y)= \sum_{b'=0}^{N-1} \frac{2\pi(b'+\frac12)}{N}\Phi_{b'}(y)\delta_{b,b'},
\eeq
where $\delta_{b,b'}$ is the Kronecker delta function. It then holds that $\Theta_b^{f}(b',y)=\Phi_b(y)\delta_{b,b'}$ and in particular we can write
\beq
A_f(s)=\frac{\Delta q^n}{(2\pi)^n}\sum_{b=0}^{N-1}\sum_{y\in\{0,1\}^{m-l}}e^{i \phi_b}\Phi_b(y),
\eeq
i.e. $A_f(s)$ is a weighted sum of quantities that are \#P-hard to compute. This is a strong indication that CV-IQP integrals can be \#P-hard to compute.

\textit{CV-IQP integrals as integrals of bounded functions.---} Consider the real part of the integral in Eq. \eqref{Amplitude}, given by 
\beq
\text{Re}[A_f(s)]=\frac{1}{(2\pi)^n}\int_{\mathbb{R}^n}\cos[f_s(q)]dq^n,
\eeq 
where $f_s(q)=f(q)-s\cdot q$. For any real, bounded function $\phi(q)$ satisfying $|\phi(q)|\leq c$ for some $c>0$, we can set $f_s(q)=\cos^{-1}[\phi(q)/c]$ such that the real part of the CV-IQP integral is proportional to the integral over $\phi(q)$
\beq
\text{Re}[A_f(s)]=\frac{1}{(2\pi)^n}\int_{\mathbb{R}^n}\frac{1}{c}\phi(q)dq^n.
\eeq
Since calculating $A_f(s)$ is at least as hard as computing its real part, we conclude that CV-IQP integrals are as hard to compute as integrals of any real, bounded function. For example, in Ref. \cite{curse}, it was shown that numerical integration using deterministic algorithms requires a worst-case number of function evaluations $N$ that is exponential in the dimension of the integral, i.e., $N=2^{O(n)}$. The proof of this fact, which we reproduce in the Supplemental Material, is based on the integration of fooling functions $\phi(q)$ satisfying $|\phi(q)|\leq 1$. From the above discussion, we can design CV-IQP integrals that are as hard to integrate as these fooling functions. Therefore, we conclude that deterministic numerical algorithms to evaluate CV-IQP integrals require exponential time on a classical computer in a worst-case setting. Together with our previous results, this cements the understanding that calculating CV-IQP integrals is a challenging computational problem. 

\textit{Hardness of sampling from CV-IQP circuits.---}
We have previously presented arguments to establish the computational hardness of approximating CV-IQP integrals. We now formulate this concretely in the form of the following conjecture. 
\begin{conj}\label{conjecture}
There exists a family of polynomials $\mathcal{F}$ and a corresponding family of CV-IQP circuits $\mathcal{C}$ such that, for a fraction $1/8$ of circuits $C_f\in \mathcal{C}$ with $f\in \mathcal{F}$, it is a \#P-hard problem to approximate the probability $|A_f(0)|^2$ of obtaining outcome $s=(0,\ldots,0)$ up to multiplicative error $1/4+o(1)$ .
\end{conj}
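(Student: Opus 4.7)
The plan is to adapt the worst-to-average-case hardness framework developed for discrete IQP circuits in Refs.~\cite{bremner2016average,bremner2017achieving} to the continuous-variable setting. The starting point is the weighted-sum representation of Eq.~\eqref{A_f(s)}: the construction earlier in this Letter already exhibits specific polynomials $f$ for which $A_f(0)$ is, up to an arbitrarily small error $\epsilon$, a weighted sum of \#P-hard counting quantities $\Omega_b^f$. Combined with Stockmeyer's theorem, a multiplicative approximation of $|A_f(0)|^2$ to within factor $1/4+o(1)$ on these instances would yield a $\mathrm{BPP}^{\mathrm{NP}}$ algorithm for a \#P-hard problem, establishing worst-case \#P-hardness for a concrete family $\mathcal{F}_0$ of polynomials. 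The remaining task is to amplify this worst-case hardness into average-case hardness over a broader, naturally-defined family $\mathcal{F}$.

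The key technical step would be a polynomial-interpolation argument in the spirit of Lipton. First I would fix $\mathcal{F}$ to consist of polynomials of bounded degree whose coefficients are drawn uniformly from a finite set, making ``a fraction $1/8$ of circuits'' a well-defined notion. Given a worst-case instance $f_0\in\mathcal{F}_0$ and a random $g\in\mathcal{F}$, I would form the one-parameter family $f_t(q)=f_0(q)+t\,g(q)$ and argue that (i) for each fixed $t$ in a suitable evaluation set the shifted instance $f_t$ is distributed essentially uniformly over $\mathcal{F}$, and (ii) the quantity $|A_{f_t}(0)|^2$, or rather its discretized approximant from Eq.~\eqref{A_f(s)}, depends on $t$ as a polynomial of controlled degree $d$. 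Evaluating this approximant at $\mathrm{poly}(d)$ values of $t$ and applying a Berlekamp--Welch-style decoder would then recover $A_{f_0}(0)$. A putative polynomial-time algorithm succeeding on a $7/8$ fraction of inputs would, by Markov's inequality and a union bound over the evaluation points, supply enough correct samples to interpolate, contradicting the worst-case \#P-hardness of $f_0$.

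The main obstacle will be the worst-to-average-case reduction itself. In the discrete IQP case the reduction relies crucially on output probabilities being low-degree polynomials over a \emph{finite field}, which provides both a clean uniform measure on instances and an explicit error-correcting interpolation. In the CV setting the coefficients of $f$ are real-valued and the underlying integral is over $\mathbb{R}^n$, so neither ingredient is directly available. A natural workaround is to carry out the reduction entirely at the level of the discretized expression in Eq.~\eqref{A_f(s)}, where the phases $\phi_b=2\pi b/N$ allow coefficients to be confined to $\mathbb{Z}_N$, and only at the end to invoke the error bound $\epsilon$ to lift the result back to the continuous integral. A secondary subtlety is that the reduction must preserve \emph{multiplicative} rather than additive error; this likely requires combining a constant-gap hardness amplification with a robust interpolation scheme such as the truncated Chebyshev expansion of Ref.~\cite{bremner2017achieving}, and it is precisely this step that should fix the quantitative constant $1/8$ appearing in the statement.
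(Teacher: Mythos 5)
There is a fundamental mismatch here: the statement you are trying to prove is stated in the paper as a \emph{conjecture}, and the paper deliberately offers no proof of it --- only heuristic evidence (the weighted-sum-of-\#P-hard-problems construction and the reduction to integration of bounded functions). The entire point of Theorem~\ref{theorem} is that the sampling-hardness result is \emph{conditional} on this conjecture. Your proposal is, by its own admission, a plan with unresolved obstacles rather than a proof, and the obstacles you flag are precisely the ones that make this an open problem rather than a theorem.

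Two of the gaps are worth naming concretely. First, even the worst-case step is not established by the material in the paper: Eq.~\eqref{A_f(s)} shows that $A_f(0)$ equals, up to an additive error $\epsilon$, a phased sum $\sum_b e^{i\phi_b}\Omega_b^f$ of quantities that are individually \#P-hard to compute \emph{exactly}. A multiplicative $1/4$-approximation of $|A_f(0)|^2$ does not let you recover any individual $\Omega_b^f$ (the phased sum can cancel), and an additive discretization error $\epsilon$ is only harmless for a multiplicative guarantee if $|A_f(0)|^2$ is not itself comparably small --- which is exactly what the anti-concentration analysis elsewhere in the paper is needed for and what is not controlled on the hard instances. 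The paper is careful to call this a ``strong indication,'' not a proof. Second, your Lipton-style interpolation cannot go through as described: $|A_{f_t}(0)|^2$ for $f_t = f_0 + t\,g$ is not a low-degree polynomial in $t$ (the integrand is $e^{if_t(q)}$), and passing to the discretized form of Eq.~\eqref{A_f(s)} makes matters worse, since the indicator functions $\Theta_b^{f_t}$ depend on $t$ discontinuously. Moreover, Berlekamp--Welch decoding requires exact (or additively tiny-error) values at most evaluation points, whereas the hypothesis only supplies constant multiplicative approximations; making worst-to-average-case reductions robust to such errors is the central unsolved difficulty in all of the sampling-supremacy literature, including the discrete IQP case you cite, where the analogous average-case statement is likewise only conjectured. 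Your write-up should present this as supporting intuition for the conjecture, not as a proof.
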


Our goal is to leverage this conjecture into a statement about the computational hardness of approximate sampling from the output distribution of CV-IQP circuits. We will make use of the following Lemma, adapted from Ref. \cite{bremner2016average} into a CV setting by using displacement circuits to permute the outcome probabilities.

\begin{lem}\label{lemma_main} Let $C_f$ be a CV-IQP circuit acting on $n$ qumodes, where $C_f$ is chosen from some appropriate family of circuits. Let $C_{f,r}$ be the circuit obtained by adding diagonal gates $U_r=\prod_{k=1}^ne^{-i \hat{q}_k r_k}$ to $C_f$, with $r=(r_1,r_2,\ldots,r_n)$ and $r_k\in\{-L,-L+2\Delta_p,\ldots,L-2\Delta_p,L\}$ for some $L> 1$. There are $\ell=L/\Delta_p$ possible values of each $r_k$. Assume that there exists a polynomial-time classical algorithm $\mathcal{A}$ that can approximate the probability distribution of any CV-IQP circuit $C'_f$ up to additive error $\epsilon$. Then for any $\delta>0$, there exists an $\text{FBPP}^{\text{NP}}$ algorithm that given access to $\mathcal{A}$ approximates $|A_{f,r}(0)|^2$ for a circuit $C_{f,r}$ up to additive error
\beq
O\left(\frac{(1+o(1))\epsilon}{\delta \ell^n}+|A_{f,r}(0)|^2/\text{poly(n)}\right)
\eeq
with probability at least $1-\delta$.\\
\end{lem}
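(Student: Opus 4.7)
My plan is to follow the Bremner--Montanaro--Shepherd template \cite{bremner2016average}, replacing the Pauli-$X$ hiding used for qubit IQP with the displacement-based hiding described in the statement, and then combining Stockmeyer's approximate counting with a Markov averaging over the random displacement.

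First, I would verify the hiding property. Since $U_r$ is diagonal in the position basis with eigenvalues $e^{-i r\cdot q}$, inserting it into Eq.~\eqref{Amplitude} gives
\begin{align}
A_{f,r}(s)=\frac{1}{(2\pi)^n}\int_{\mathbb{R}^n} e^{i f(q)}e^{-i(s+r)\cdot q}\,dq^n = A_f(s+r),
\end{align}
so that $|A_{f,r}(0)|^2 = |A_f(r)|^2$. Hence, estimating the outcome-$0$ probability of the randomly displaced circuit $C_{f,r}$ is equivalent to reading off $P_f=|A_f|^2$ at the random grid point $r$, and after binning the homodyne outputs to a grid compatible with the $2\Delta_p$-spaced displacements the $\ell^n$ choices of $r$ permute the outcome bins. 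Each $C_{f,r}$ is itself a CV-IQP circuit, so the simulator hypothesis on $\mathcal{A}$ applies to every member of this family.

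Second, I would combine Stockmeyer's theorem with a Markov argument. Fix a random $r$ and let $Q_{f,r}$ be the output distribution of $\mathcal{A}(C_{f,r})$, so that by assumption $\|Q_{f,r}-P_{f,r}\|_1\le\epsilon$. Applying Stockmeyer with $\mathcal{A}$ as an oracle produces an $\mathrm{FBPP}^{\mathrm{NP}}$ multiplicative estimate $\tilde Q$ of $Q_{f,r}(0)$ obeying $|\tilde Q - Q_{f,r}(0)|\le Q_{f,r}(0)/\mathrm{poly}(n)$, which after substituting $Q_{f,r}(0)$ by $|A_{f,r}(0)|^2$ accounts for the $|A_{f,r}(0)|^2/\mathrm{poly}(n)$ piece of the advertised error. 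For the averaging step, the hiding identity lets us view the simulator's error at outcome $0$ of $C_{f,r}$ as its error at outcome $r$ of $C_f$, so that $\sum_r|Q_{f,r}(0)-P_{f,r}(0)|\le\epsilon$; the expected per-outcome error under uniform $r$ is therefore at most $\epsilon/\ell^n$, and Markov's inequality gives $|Q_{f,r}(0)-P_{f,r}(0)|\le\epsilon/(\delta\ell^n)$ with probability at least $1-\delta$. A triangle inequality combining these two bounds, together with the $(1+o(1))$ correction from translating $Q_{f,r}(0)$ to $|A_{f,r}(0)|^2$, yields the advertised additive error.

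The main obstacle will be handling the continuous-variable nature of the measurement outputs. The hiding identity $A_{f,r}(s)=A_f(s+r)$ is an equality of continuous amplitudes, but the Markov averaging only makes sense once outcomes are binned into a discrete grid aligned with the allowed displacements. I would need to verify that such binning preserves the simulator's additive-error guarantee up to a negligible loss, that the total binned mass of $P_f$ inside $[-L,L]^n$ is $1-o(1)$ so that $\sum_r P_{f,r}(0)\approx 1$ and the averaging is non-degenerate, and that without loss of generality the simulator can be taken to be consistent across the family $\{C_{f,r}\}_r$ under the outcome-relabeling equivalence, i.e., $Q_{f,r}(0)=Q_f(r)$. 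Once these discretization details are in place, the remainder of the argument is essentially a transcription of the qubit-IQP proof into the displacement-based CV setting.
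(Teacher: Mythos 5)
Your proposal is correct and follows essentially the same route as the paper's proof: the hiding identity $A_{f,r}(s)=A_f(s+r)$, Stockmeyer's approximate counting applied to the simulator's output probability, Markov's inequality over the uniformly random displacement $r$, and a triangle inequality to combine the two error terms. The discretization concerns you flag (binning the continuous outcomes to the $2\Delta_p$ grid) are in fact glossed over in the paper's own proof, so your treatment is if anything slightly more careful on that point.
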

See the Supplemental Material for a proof. Our goal is to show that this Lemma also implies that the $\text{FBPP}^{\text{NP}}$ algorithm gives a good \emph{multiplicative} approximation of CV-IQP integrals. For this we require an anti-concentration result, namely we need to show that $|A_f(0)|^2\geq \beta(\ell^{-n})$ for some $\beta>0$. From the Payley-Zigmund inequality, it holds that 
\beq\label{Payley}
\Pr(|A_f(0)|^2\geq \alpha\mathbb{E}[|A_f(0)|^2])\geq (1-\alpha)^2\frac{\mathbb{E}[|A_f(0)|^2]^2}{\mathbb{E}[|A_f(0)|^4]},
\eeq
where the expectation is taken over all circuits $C_{f,s}$. As shown in the Supplemental Material, there is a value of $L$ such that $\mathbb{E}[|A_f(0)|^2]^2/\mathbb{E}[|A_f(0)|^4]\geq 1$ and therefore for $\alpha=1/2$ it holds that
\beq
\Pr(|A_f(0)|^2\geq \frac12\ell^{-n})\geq \frac{1}{4}.
\eeq
For $\epsilon=\alpha(1-\alpha)^2/8$ and $\delta=(1-\alpha)^2/2$, Lemma \ref{lemma_main} implies that there exists an $\text{FBPP}^{\text{NP}}$ algorithm with access to the classical sampling algorithm $\mathcal{A}$ that, with constant probability over the choice of circuits, approximates $|A_f(0)|^2$ up to additive error 
\begin{align}\label{multip. error}
O\left((1+o(1))\frac{1}{4}+1/\text{poly(n)}\right)|A_f(0)|^2,
\end{align}
and therefore it also approximates $|A_f(0)|^2$ up to constant multiplicative error $1/4+o(1)$. See the Supplemental Material for full details of this derivation. We are now ready to state the main result of this section.

\begin{thm}\label{theorem}
Assume that Conjecture \ref{conjecture} is true. Then, if there exists a classical algorithm running in polynomial time that samples from any CV-IQP circuit up to additive error $1/64$, the polynomial hierarchy collapses to third level.
\end{thm}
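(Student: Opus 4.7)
The plan is a Stockmeyer-style contradiction argument adapted to the CV-IQP setting via Lemma \ref{lemma_main}. I would assume, for contradiction, that a polynomial-time classical sampler $\mathcal{A}$ exists with additive error at most $1/64$ on every CV-IQP circuit, and proceed in three steps: (i) combine Lemma \ref{lemma_main} with the Paley--Zygmund anti-concentration already derived to obtain an $\text{FBPP}^{\text{NP}}$ algorithm that multiplicatively approximates $|A_{f,r}(0)|^2$ on a constant fraction of circuits; (ii) apply Conjecture \ref{conjecture} to identify this approximation task with a \#P-hard problem; and (iii) invoke Toda's theorem to collapse the polynomial hierarchy to its third level.

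For step (i), the parameter choices in the paragraph preceding the theorem are already tuned for this: with $\alpha = 1/2$, Lemma \ref{lemma_main} asks for additive sampling error $\epsilon = \alpha(1-\alpha)^2/8 = 1/64$ and failure probability $\delta = (1-\alpha)^2/2 = 1/8$, which exactly match the theorem's hypothesis. Plugging these into the additive error bound of Lemma \ref{lemma_main} and applying the anti-concentration estimate $\Pr(|A_{f,r}(0)|^2 \geq \tfrac{1}{2}\ell^{-n}) \geq 1/4$ via a union bound yields an $\text{FBPP}^{\text{NP}}$ algorithm that, on a constant fraction of $(f,r)$, approximates $|A_{f,r}(0)|^2$ to multiplicative error $1/4 + o(1)$.

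For steps (ii) and (iii), Conjecture \ref{conjecture} posits that this very multiplicative approximation is \#P-hard on a $1/8$ fraction of circuits in some family. Choosing the family $\mathcal{C}$ so that the Lemma \ref{lemma_main} displacements $r$ are absorbed into the Conjecture \ref{conjecture} family, the success fraction of the $\text{FBPP}^{\text{NP}}$ algorithm and the hardness fraction of Conjecture \ref{conjecture} can be arranged to overlap, placing a \#P-hard problem inside $\text{FBPP}^{\text{NP}}$. Since $\text{FBPP}^{\text{NP}}$ lies in the third level of the polynomial hierarchy, Toda's theorem $\text{PH} \subseteq \text{P}^{\#\text{P}}$ then forces $\text{PH} \subseteq \Sigma_3^p$, which is the desired collapse.

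The main obstacle is the overlap in step (ii): the $\text{FBPP}^{\text{NP}}$ algorithm succeeds on a fraction of circuits determined jointly by Lemma \ref{lemma_main} and the anti-concentration bound, while Conjecture \ref{conjecture} asserts hardness on a $1/8$ fraction that is a priori different; for the argument to go through, these two fractions must be taken with respect to the same circuit distribution, and their combined measure must exceed $1$. This is the standard delicate point in Bremner--Jozsa--Shepherd-style quantum supremacy arguments, which I would handle by carefully matching the family $\mathcal{C}$ in Conjecture \ref{conjecture} with the augmented family $\{C_{f,r}\}$ from Lemma \ref{lemma_main}, and tightening the constants in the Paley--Zygmund step as needed.
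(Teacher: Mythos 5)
Your proposal is correct and follows essentially the same route as the paper's own proof: the parameter choices $\epsilon=1/64$, $\delta=1/8$ feeding Lemma \ref{lemma_main}, the Paley--Zygmund anti-concentration bound giving a multiplicative $1/4+o(1)$ approximation on a $1/8$ fraction of circuits, then Conjecture \ref{conjecture} and Toda's theorem. If anything you are more explicit than the paper about the union bound and the need for the algorithm's success fraction to overlap the conjectured hardness fraction --- a genuine subtlety that the paper's two-line proof passes over silently.
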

\begin{proof}
From Lemma \ref{lemma_main} and Eq. \eqref{multip. error}, a polynomial-time classical algorithm that samples from any CV-IQP circuit up to additive error $1/64$ implies an $\text{FBBP}^\text{NP}$ algorithm for approximating $|A_f(0)|^2$ up to a multiplicative error of $1/4+o(1)$ for at least a fraction $1/8$ of circuits. From Conjecture \ref{conjecture}, this implies that the $\text{FBBP}^\text{NP}$ algorithm, which is contained in the third level of the polynomial hierarchy, can solve any problem in $\text{P}^{\#P}$. By Toda's theorem \cite{toda1991pp}, the entire polynomial hierarchy is contained in $\text{P}^{\#\text{P}}$ and therefore this causes a collapse to third level.
\end{proof}
\begin{center}
\begin{figure*}[t!]
\begin{tabular}{ccccc}
\includegraphics[width=0.4 \columnwidth]{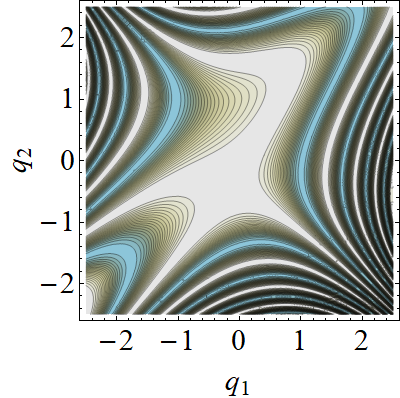}& \includegraphics[width=0.4 \columnwidth]{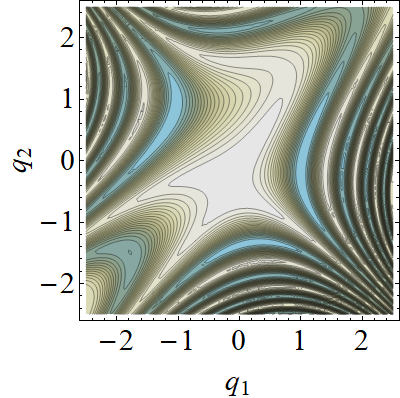} &
\includegraphics[width=0.4 \columnwidth]{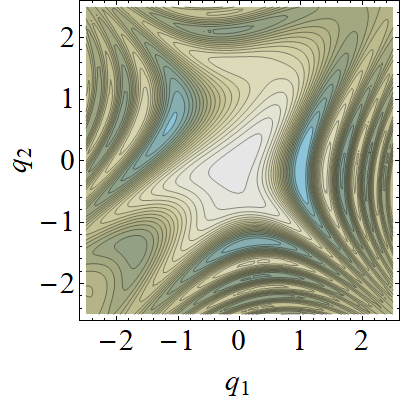}&
\includegraphics[width=0.4 \columnwidth]{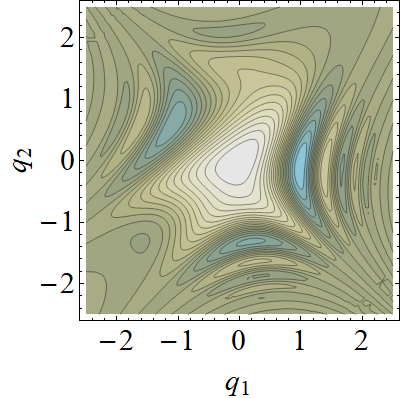} &
\includegraphics[width=0.1058 \columnwidth]{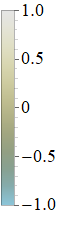}
\end{tabular}
\caption{Contour plot of the real part of the integrand for the polynomial $f(q_1,q_2)=q_1-q_2+q_1q_2+q_1^2-q_2^2-q_1q_2^2-q_1^2q_2+q_1^3+q_2^3$. The left panel shows the ideal case of infinte squeezing and precision, whereas the remainder panels show, from left to right, the case for precision $\Delta_p=10^{-3}$ and squeezing $\sigma=3$, $\sigma=1.5$ and $\sigma=1$ . In terms of decibels (dBs), using $S_{dB}=10\log_{10}\sigma^2$, these approximately correspond to 9.5 dB, 3.5 dB, and 0 dB respectively. For large values of $\sigma$, the function is approximately equal to the ideal case in the region of slow oscillations, but it becomes closer to functions that are easier to integrate when squeezing is low.}\label{Fig: Int1}
\end{figure*}
\end{center}
\textit{Role of finite squeezing.---} When the inputs are finitely squeezed states with variance $\sigma^2$, and the measurements have limited precision $\Delta_p$ as given by the projectors $\eta_{s_k}=\int_{s_k-\Delta_p}^{s_k+\Delta_p}\ket{p_k}\bra{p_k}dp_k$, in the regime of $\sigma\ll 1/\Delta_p$, the probability $P_f(s)$ of obtaining an outcome $s$ can be expressed as $\tilde{P}_f(s)=|\tilde{A}_f(s)|^2$ where
\beq\label{Amplitude_finite}
\tilde{A}_f(s)=\left(\frac{\Delta_p}{\pi^{3/2}\sigma}\right)^{n/2}\int_{\mathbb{R}^n}e^{i f(q)-i s\cdot q}e^{-q^2/(2\sigma^2)}dq^n.
\eeq
Besides normalization factors, the only difference compared to the integrals in the ideal case is the presence of a Gaussian term $e^{-q^2/(2\sigma^2)}$, which sets a scale $\sigma$ for the region where the integrand is non-negligible. As discussed previously, the fast oscillations of the function $e^{if(q)}$ also introduce a region with non-negligible contributions to the integral. This region is defined by a scale $L_{\text{osc}}$. The integrals in the ideal and the finite squeezing cases are then excellent approximations of each other as long as $\sigma$ is sufficiently larger than $L_{\text{osc}}$, retaining their computational hardness. For small $\sigma$, the integrals are close to Gaussian integrals, which can be computed efficiently. This is illustrated in Fig. \ref{Fig: Int1} for the case where the function $f(q)$ is a degree-3 polynomials over two variables.

Finally, we note that there is a trade-off between the amount of squeezing in the initial states and the coefficients of the polynomial $f(q)$. A transformation $q\rightarrow T q$ for some $T>1$ induces a change $L_{\text{osc}}\rightarrow L_{\text{osc}}/T$, allowing us to make the region of slow oscillations as small as desired compared to $\sigma$. This is illustrated in Fig. \ref{Fig: Int3}. 

%Although a detailed quantification of the minimum amount of squeezing needed is outside the scope of this work, with respect to coefficients of $f(q)$ that are bounded in absolute value by one, it seems reasonable to expect that at least 10 dB of squeezing will be needed to have hardness of simulation.

\begin{center}
\begin{figure}[t!]
\begin{tabular}{cc}
\includegraphics[width=0.5 \columnwidth]{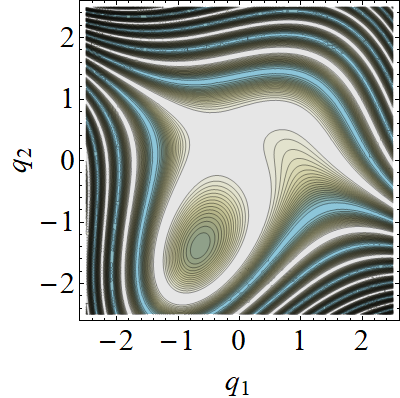} & \includegraphics[width=0.5 \columnwidth]{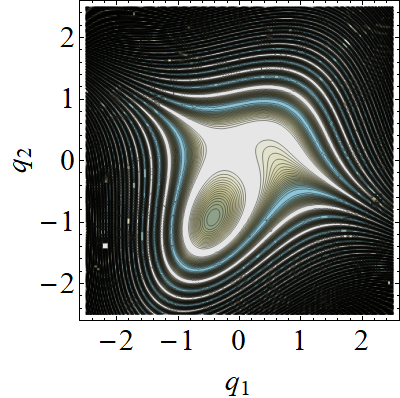}
\end{tabular}
\caption{Contour plot of the real part of the integrand for the polynomial $f(q_1,q_2)=-Tq_1-Tq_2+T^2q_1q_2-T^2q_1^2+T^2q_2^2+T^3q_1q_2^2-T^3q_1^2q_2+T^3q_1^3+T^3q_2^3$. The color code for the contours is the same as in Fig. \ref{Fig: Int1}. The left panel shows the case for $T=1$ and the right panel shows the case for $T=1.5$.  The effect of a larger value of $T$ is to rescale the function so that the region that contributes to the integral is reduced to a smaller area, which can potentially compensate for lesser amounts of squeezing and vice versa. }\label{Fig: Int3}
\end{figure}
\end{center}

\textit{Discussion.---} We have shown that, provided CV-IQP integrals are on average \#P-hard to approximate, approximate sampling from the output distribution of CV-IQP circuits cannot be done in polynomial time on a classical computer unless the polynomial hierarchy collapses to third level. The conjecture that these integrals are \#P-hard to approximate is not only supported by our results connecting CV-IQP integrals to computational complexity theory, but also by decades of research on attempts to efficiently compute high-dimensional integrals. Our results thus further supports the claim that continuous-variable quantum computers are candidates for challenging the Extended Church-Turing thesis by demonstrating quantum supremacy in the near term. Crucially, this holds even in the case of approximate simulation of CV-IQP circuits with finitely-squeezed input states and limited precision, without the explicit need for fault-tolerance and error correction. 

To strengthen the claim of supremacy even further, it is important to extend our results to the case where there are errors in the diagonal circuit $C_f\ket{q}=e^{if(q)}\ket{q}$ and to show that CV-IQP integrals remain hard to calculate for a simple class of circuits, for example those corresponding to degree-3 polynomials. Finally, it is of great interest to understand the extent to which CV-IQP circuits may be able to directly solve challenging computational problems. Indeed, as we have previously discussed, high dimensional integrals appear in a large class of problems of practical interest -- notably in physics and finance -- making these a potentially fertile ground for applications of continuous-variable quantum computing.

\textit{Acknowledgements.---} The authors thank A. Ignjatovic, T. Bromley, and N. Killoran for valuable discussions.

\bibliographystyle{apsrev}
\bibliography{Bibliography}
\pagebreak

\onecolumngrid

\section{Supplemental Material}

\subsection{CV-IQP integrals}

We perform a detailed derivation of the expressions for CV-IQP integrals. In the ideal case, input states are infinitely momentum-squeezed vacuum states given by
\beq
\ket{0}_p=\frac{1}{\sqrt{2\pi}}\int_{\mathbb{R}}\ket{q}dq.
\eeq
If the measurements are homodyne with infinite precision, the probability amplitude of obtaining an outcome $s=(s_1,s_2,\ldots,s_n)$ after the action of a diagonal circuit $C_f$ is given by
\begin{align}\label{A_f}
A_f(s)&=\bra{s}_pC_f\ket{00\ldots0}_p\nonumber\\
&=\frac{1}{(2\pi)^n}\int_{\mathbb{R}^n}e^{i f(q)}e^{-i s\cdot q'}\langle q'|q\rangle dq'^ndq^n\nonumber\\
&=\frac{1}{(2\pi)^n}\int_{\mathbb{R}^n}e^{i f(q)}e^{-i s\cdot q}dq^n.
\end{align}
When the inputs are finitely squeezed states of the form
\beq
\ket{\sigma}_k=\frac{1}{\pi^{1/4}\sqrt{\sigma}}\int_{\mathbb{R}}e^{- q_k^2/(2\sigma^2)}\ket{q}dq_k
\eeq
with variance $\sigma^2$, the action of the circuit $C_f$ on the inputs produces the state
\beq
\ket{f}=\frac{1}{\sqrt{\mathcal{N}^n}}\int_{\mathbb{R}}e^{if(q)}e^{- q^2/(2\sigma^2)}\ket{q}dq
\eeq
where we have defined $\mathcal{N}=1/\sqrt{\pi}\sigma$. If we then apply a measurement with limited precision $\Delta_p$ given by the projectors
\beq
\eta_{s_k}=\int_{s_k-\Delta_p}^{s_k+\Delta_p}\ket{p_k}\bra{p_k},
\eeq 
the probability of obtaining an outcome $s$ is given by
\begin{align*}
\tilde{P}_f(s)&=\bra{f}\bigotimes_k\eta_{s_k}\ket{f}\\
&=\frac{1}{\mathcal{N}^n}\int_{\mathbb{R}^n}e^{i f(q)-if(q')}e^{- (q^2-q'^2)/(2\sigma^2)}\left[\prod_k \int_{s_k-\Delta_p}^{s_k+\Delta_p}\langle q'_k|p_k\rangle\langle p_k|q_k\rangle dp_k\right]dq^ndq'^n\\
&=\frac{1}{(2\pi \mathcal{N})^n}\int_{\mathbb{R}^n}e^{i f(q)-if(q')}e^{- (q^2-q'^2)/(2\sigma^2)}\left[\prod_k \int_{s_k-\Delta_p}^{s_k+\Delta_p}e^{ip_k(q'_k-q_k)} dp_k\right]dq^ndq'^n\\
&=\frac{(\Delta_p)^n}{(\pi \mathcal{N})^n}\int_{\mathbb{R}^n}e^{i f(q)-if(q')}e^{- (q^2-q'^2)/(2\sigma^2)}e^{-is\cdot(q-q')}\prod_k \sinc[\Delta_p(q_k-q'_k)]dq^ndq'^n.
\end{align*}
In practice, it is straightforward to obtain very high precision in homodyne measurements while it is challenging to obtain large values of squeezing. Thus, in the regime where $\sigma\gg 1/\Delta_p$, the sinc functions are approximately equal to 1 in the region of non-negligible values of the integrand and we can write $\tilde{P}_f(s)=|\tilde{A}_f(s)|^2$ where
\beq
\tilde{A}_f(s)=\left(\frac{\Delta_p}{\pi^{3/2}\sigma}\right)^{n/2}\int_{\mathbb{R}^n}e^{i f(q)-i s\cdot q}e^{-q^2/(2\sigma^2)}dq^n.
\eeq

\subsection{CV-IQP integrals as weighted sums of \#P-hard problems}
Recall the expression of Eq. \eqref{A_f} for the probability amplitude of obtaining an outcome $s=(s_1,s_2,\ldots,s_n)$ in a CV-IQP circuit. Let $f_s(q):=f(q)-s\cdot q$. For large values of $q$, the integrand is highly oscillatory and the integral averages to zero. This means that we can make the approximation
\beq
A_f(s)= \frac{1}{(2\pi)^n}\int_{D_L}e^{i f_s(q)}dq^n+\epsilon_a
\eeq
for some appropriately chosen constant $L$, where $D_L=[-L,L]^n$ is a hypercube of length $2L$ centered at the origin and $\epsilon_a$ is an arbitrarily small error due to the finite integration region. We can further approximate this integral by a Riemann sum over a rectangular lattice
\begin{align*}
A_f(s)&= \frac{1}{(2\pi)^n}\sum_{q_1,\ldots ,q_n}e^{i f_s(q_1,\ldots ,q_n)}\Delta q^n+\epsilon_a+\epsilon_b
\end{align*}
where $q_i\in\{-L,-L+\Delta q,\ldots,L-2\Delta q,L-\Delta q\}$ and $\epsilon_b$ is the arbitrarily small error in the approximation. Our goal is to further approximate the integrand function by a series of step functions. Let $b\in\{0,1,\ldots,N-1\}$ with $N=2^l$ for some integer $l>0$ and define the angles
\beq
\phi_b=\frac{ 2\pi b }{N}
\eeq
as well as the indicator functions
\beq
\Theta_b^{f_s}(q_1,\ldots,q_n)=\begin{cases}
1& \textrm{ if } f_s(q)\in [\phi_b,\phi_{b+1})\mod 2\pi\\
0 & \textrm{ otherwise}
\end{cases}
\eeq
so that we can approximate the complex exponential of the polynomial as
\beq
e^{if_s(q)}\simeq \sum_{b=0}^{N-1}e^{i \phi_b}\Theta_b^{f_s}(q).
\eeq
There are $2L/\Delta q$ possible values of each $q_i$, so if we set $2L/\Delta q=2^k$ for some integer $k$, we can associate each vector $(q_1,\ldots ,q_n)$ with an $m$-bit string $x\in\{0,1\}^m$, $m=kn$. Consequently, we can view each $\Theta_b^{f_s}$ as a Boolean function $\Theta_b^{f_s}:\{0,1\}^m\rightarrow \{0,1\}$. In this case, we write the approximation of the integral as
\begin{align}
A_f(s)&=\frac{\Delta q^n}{(2\pi)^n}\sum_{b=0}^{N-1}e^{i \phi_b} \sum_{x\in\{0,1\}^m}\Theta_b^{f_s}(x)+\epsilon_a+\epsilon_b+\epsilon_c\nonumber\\
&=\frac{\Delta q^n}{(2\pi)^n}\sum_{b=0}^{N-1}\sum_{x\in\{0,1\}^m}e^{i \phi_b} \Theta_b^{f_s}(x)+\epsilon,
\end{align}
where $\epsilon=\epsilon_a+\epsilon_b+\epsilon_c$ and $\epsilon_c$ is the error arising from the step-function approximation of $e^{if_s(q)}$.

\subsection{Exponential running time for deterministic numerical integration}
We follow the results of Ref. \cite{curse} and consider, without loss of generality, a numerical integration algorithm $A$ that uses a fixed set of $n$-dimensional sampling points $\theta_1,\ldots,\theta_N$ and an arbitrary mapping $\Lambda$ to approximate the integral of a function $\phi(q)$ as
\beq
A(\phi)=\Lambda[\phi(\theta_1),\ldots,\phi(\theta_N)].
\eeq
The error in the approximation is defined as
\beq
\varepsilon=\left|\int\phi(q)dq^n-A(\phi)\right|.
\eeq
To bound the number of function calls $N$, we define the fooling function
\beq
\phi(q)=\min\left\{1,\frac{1}{\sqrt{n}\delta}\text{dist}(q,\Gamma_{\delta})\right\}
\eeq
where 
\beq
\Gamma_{\delta}=\bigcup_{i=1}^NB_{\delta}(\theta_i),
\eeq
$\text{dist}(\cdot\,,\cdot)$ is the Euclidean distance, and $B_{\delta}(\theta_i)$ is a ball with radius $\delta\sqrt{n}$ centered at the point $\theta_i$. By construction, the fooling function satisfies $\phi(\theta_i)=0$ for all $i=1,\ldots,N$ and therefore the algorithm must output the same approximation for $\phi(q)$ and $-\phi(q)$. This allows us to bound the additive error $\epsilon$ in the approximation of the algorithm $A$ in terms of the value of the integral of $\phi(q)$ as \cite{curse}
\begin{align}
\varepsilon&\geq\frac{1}{2}\left|\int_{\mathbb{R}^n}\phi(q)dq^n-A(\phi)\right|+\left|\int_{\mathbb{R}^n}-\phi(q)dq^n-A(\phi)\right|\\
&\geq \frac{1}{2}\left|\int_{\mathbb{R}^n}\phi(q)dq^n\right|+\frac{1}{2}\left|\int_{\mathbb{R}^n}-\phi(q)dq^n\right|\nonumber\\
&=\int_{\mathbb{R}^n}\phi(q)dq^n.
\end{align}
In fact, this bound also holds if we multiply $\phi(q)$ by any strictly positive function $g(q)> 0$ since the algorithm $A$ also gives the same answer for $\phi(q)g(q)$ and $-\phi(q)g(q)$, so we can write
\beq\label{epsilon_int}
\varepsilon\geq \int_{\mathbb{R}^n}\phi(q)g(q)dq^n.
\eeq
Now recall the expression for the CV-IQP integral in the presence of finite squeezing, where we are omitting known normalization factors
\beq
A_f(s)= \int_{\mathbb{R}^n}e^{i f(q)-is\cdot q}e^{- q^2/(2\sigma^2)}dq^n
\eeq
with the real part of the integral given by
\beq
\text{Re}[A_f(s)]= \int_{\mathbb{R}^n}\cos[f(q)-is\cdot q]e^{- q^2/(2\sigma^2)} dq^n.
\eeq
We now fix $f(q)$ to satisfy the relation
\beq
f(q)-s\cdot q=\cos^{-1}[\phi(q)]:=\tilde{f}_s(q).
\eeq 
Note that $0\leq \phi(q)\leq 1$ and therefore the inverse cosine is well defined. Moreover, since we have made no restrictions about diagonal circuits implementing $e^{i f(q)}$, we take $\tilde{f}_s(q)$ to be an arbitrarily good polynomial approximation of $\cos^{-1}[\phi(q)]$. We then have 
\begin{align}
\text{Re}[A_f(s)]=&\int_{\mathbb{R}^n}\cos[\tilde{f}_s(q)]e^{- q^2/(2\sigma^2)}\nonumber\\
&= \int_{\mathbb{R}^n}\phi(q)e^{- q^2/(2\sigma^2)}dq^n+\epsilon_a\nonumber\\
&=\int_{\mathbb{R}^n}\phi(q)g(q)dq^n+\epsilon_a
\end{align}
where $\epsilon_a$ is the arbitrarily small error arising from the polynomial approximation of $\cos^{-1}[\phi(q)]$ and we have implicitly defined
\beq
g(q):=e^{- q^2/(2\sigma^2)}.
\eeq
The function $g(q)$ is exponentially decreasing for large $q$ which allows us to write
\begin{align}\label{real to fooling}
\text{Re}[A_f(s)]&=\int_{D_L}\phi(q)g(q)dq^n+\epsilon_a+\epsilon_b
\end{align} 
where, as before, $D_L=[-L,L]^n$ is a hypercube of length $2L$ centered at the origin and $\epsilon_b$ is an exponentially small error arising from the approximation due to the finite integration region. From Eq. \eqref{real to fooling} we conclude that the algorithm $A$ must also, up to negligible errors $\epsilon_a$ and $\epsilon_b$, incur an error $\epsilon$ in evaluating the real part of the CV-IQP integral and therefore an error at least $\epsilon$ in evaluating the full complex value of the integral.\\

We now proceed to give a lower bound on the number of sampling points that are needed to achieve a fixed error $\epsilon$ in the numerical integration of the fooling function $\phi(q)g(q)$ and therefore also on the CV-IQP integral. We have that
\begin{align}
\int_{D_L}\phi(q)g(q)dq^n&\geq \int_{D_L/P_\delta}\phi(q)g(q)dq^n\nonumber\\
&=\int_{D_L/P_\delta}g(q)dq^n,
\end{align}
where we have used the fact that $\phi(q)g(q)\geq 0$ for $q\in P_\delta$ and $\phi(q)=1$ for $q\notin P_\delta$. Since $g(q)\leq 1$ for all $q$, it holds that
\begin{align}
\int_{D_L/P_\delta}g(q)dq^n&\geq\int_{D_L}g(q)dq^n-\text{Vol}(P_\delta)\nonumber\\
&> \int_{D_L}g(q)dq^n-N(\delta\sqrt{2\pi e})^n\nonumber\\
&=\int_{\mathbb{R}^n}g(q)dq^n-N(\delta\sqrt{2\pi e})^n+\epsilon_b\nonumber\\
&=(\sqrt{2\pi}\sigma)^n-N(\delta\sqrt{2\pi e})^n+\epsilon_b,
\end{align}
where we used the bound $\text{Vol}(P_\delta)<(\delta\sqrt{2\pi e})^n$ for the volume of an n-dimensional ball with radius $\delta$. Combining these results we obtain
\begin{align}
\epsilon\geq(\sqrt{2\pi}\sigma)^n-N(\delta\sqrt{2\pi e})^n+2\epsilon_b.
\end{align}
Combining this with Eq. \eqref{epsilon_int} and recalling that $\epsilon_b$ is the error arising from the finite integration region $D_L$ gives
\beq
N\geq\left[\frac{\sigma}{\delta\sqrt{e}}\right]^n-(\epsilon-2\epsilon_b).
\eeq
Therefore, as long as we choose $\delta<\sigma/\sqrt{e}$ for the fooling function, it will hold that $N=2^{O(n)}$. We conclude that there exists a class of worst-case fooling functions $\phi(q)$ and corresponding CV-IQP circuits such that to achieve a constant approximation error in evaluating the corresponding CV-IQP integral, any classical numerical algorithm $A$ requires an exponential number of functions calls and therefore an exponentially large running time.

\subsection{Hardness of Sampling}

We begin with a proof of Lemma 1 in the main manuscript, which we reproduce here for clarity.\\

\begin{lem}\label{lemma} Let $C_f$ be a CV-IQP circuit acting on $n$ qumodes, where $C_f$ is chosen from some appropriate family of circuits. Let $C_{f,r}$ be the circuit obtained by adding diagonal gates $U_r=\prod_{k=1}^ne^{-i q_k r_k}$ to $C_f$, with $r=(r_1,r_2,\ldots,r_n)$ and $r_k\in\{-L,-L+2\Delta_p,\ldots,L-2\Delta_p,L\}$. Assume that there exists a polynomial-time classical algorithm $A$ such that for any CV-IQP circuit $C'_f$, the algorithm $A$ can approximate the probability distribution of $C'_f$ up to additive error $\epsilon$. Then for any $\delta>0$, there exists an $\text{FBPP}^{\text{NP}}$ algorithm that given access to $A$ approximates $|A_{f,r}(0)|^2$ for a circuit $C_{f,r}$ up to additive error
\beq
O\left(\frac{(1+o(1))\epsilon}{\delta \ell^n}+|A_{f,r}(0)|^2/\text{poly(n)}\right)
\eeq
with probability at least $1-\delta$.\\
\end{lem}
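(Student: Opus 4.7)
The plan is to adapt the average-case reduction of Bremner, Montanaro and Shepherd from the discrete IQP setting to our continuous-variable one by combining three ingredients: (i) a displacement identity that rewrites $|A_{f,r}(0)|^2$ as an outcome probability of the undisplaced circuit $C_f$; (ii) a Markov-style averaging argument over the $\ell^n$ values of $r$; and (iii) Stockmeyer's approximate counting theorem, which lifts the polynomial-time classical sampler $\mathcal{A}$ into an $\text{FBPP}^{\text{NP}}$ procedure that estimates individual output probabilities of $\mathcal{A}$ to within inverse-polynomial multiplicative error.

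First I would verify the displacement identity. Since $U_r$ is diagonal in position with $U_r\ket{q}=e^{-iq\cdot r}\ket{q}$, the CV-IQP amplitude of outcome $s$ for $C_{f,r}$ is
\begin{equation}
A_{f,r}(s)=\frac{1}{(2\pi)^n}\int_{\mathbb{R}^n} e^{if(q)}\,e^{-i(s+r)\cdot q}\,dq^n=A_f(s+r),
\end{equation}
so $|A_{f,r}(0)|^2=P_f(r)$, where $P_f$ is the outcome distribution of $C_f$. The displacement grid $r\in\{-L,-L+2\Delta_p,\ldots,L\}^n$ is chosen to be aligned with the $\ell^n$ bins of width $2\Delta_p$ induced by the finite-precision homodyne projectors $\eta_{s_k}$, so the values $P_f(r)$ form a bona fide discrete distribution on this grid (with inverse-polynomial truncation mass outside $[-L,L]^n$ for $L$ large enough). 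Moreover $C_{f,r}$ is itself a CV-IQP circuit, so the hypothesized sampler $\mathcal{A}$ applies to it.

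Next I would run $\mathcal{A}$ on $C_f$ to obtain samples from a distribution $\tilde P_f$ with $\sum_{r}|\tilde P_f(r)-P_f(r)|\le\epsilon$. Because this total error is spread over $\ell^n$ grid points, Markov's inequality applied to a uniformly random $r$ gives
\begin{equation}
\Pr_r\!\left[\,|\tilde P_f(r)-P_f(r)|\ge \tfrac{\epsilon}{\delta\,\ell^n}\right]\le \delta,
\end{equation}
so with probability at least $1-\delta$ the quantity $\tilde P_f(r)$ is an additive-$O(\epsilon/(\delta\ell^n))$ approximation of $|A_{f,r}(0)|^2$. Now $\tilde P_f(r)$ is itself the probability that the polynomial-time classical sampler $\mathcal{A}$ outputs $r$ on input $C_f$, and Stockmeyer's theorem produces, in $\text{FBPP}^{\text{NP}}$ with $\mathcal{A}$ as oracle, an estimate $\hat P$ satisfying $|\hat P-\tilde P_f(r)|\le \tilde P_f(r)/\text{poly}(n)$. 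Substituting $\tilde P_f(r)=P_f(r)+O(\epsilon/(\delta\ell^n))$ and combining via the triangle inequality yields the claimed additive bound
\begin{equation}
O\!\left(\frac{(1+o(1))\epsilon}{\delta\,\ell^n}+\frac{|A_{f,r}(0)|^2}{\text{poly}(n)}\right),
\end{equation}
where the $(1+o(1))$ factor on the first term absorbs the Stockmeyer correction applied to the Markov slack.

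The main obstacle I anticipate is the continuous-to-discrete bookkeeping: one must verify carefully that the finite-precision homodyne measurement genuinely collapses $P_f$ onto the displacement grid with at most inverse-polynomial mass outside the truncation hypercube $[-L,L]^n$, that the family of displaced circuits $C_{f,r}$ remains inside the CV-IQP class and hence within the promise of $\mathcal{A}$, and that the bin-width $2\Delta_p$ is consistent with the grid spacing assumed in $r_k$. Once these technicalities are dispatched, the remaining content of the lemma is a direct transcription of the Bremner--Montanaro--Shepherd template into the CV setting.
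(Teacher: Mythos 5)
Your proposal is correct and follows essentially the same route as the paper's proof: the displacement identity $|A_{f,r}(0)|^2=\Pr(C_f\text{ outputs }r)$, Stockmeyer's approximate counting applied to the sampler's output probability, a triangle inequality, and Markov's inequality over the uniformly random displacement $r$ — i.e., the Bremner--Montanaro--Shepherd template transcribed to the CV grid, exactly as in the Supplemental Material. The only difference is cosmetic ordering of the Markov and Stockmeyer steps, which does not affect the argument.
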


\textit{Proof:} We follow closely the proof of Ref. \cite{bremner2016average} adapted to CV-IQP circuits. Define $p_{rs}=\Pr(C_{f,r}\text{ outputs } s)$. Similarly, define $q_{rs}=\Pr(A\text{ outputs } s)$ on input $C_{f,r}$, denoting by $q_{0s}$ the probability for $r=(0,0,\ldots,0)$. From Stockmeyer's counting algorithm \cite{stockmeyer1985approximation}, there exists an $\text{FBPP}^{\text{NP}}$ algorithm with access to $A$ that produces an estimate $\tilde{q}_{s}$ such that
\beq
|\tilde{q}_{s}-q_{0s}|\leq q_{0s}/\text{poly(n)}.
\eeq
Then it holds that 
\begin{align*}
|\tilde{q}_{s}-p_{0s}|&\leq |\tilde{q}_{s}-q_{0s}|+|q_{0s}-p_{0s}|\\
&\leq q_{0s}/\text{poly(n)}+|q_{0s}-p_{0s}|\\
&\leq(p_{0s}+|q_{0s}- p_{0s}|)/\text{poly(n)}+|q_{0s}-p_{0s}|\\
&=p_{0s}/\text{poly(n)}+|q_{0s}-p_{0s}|(1+1/\text{poly(n)}).
\end{align*}
From Markov's inequality we have
\begin{align*}
\Pr_s\left(|q_{0s}-p_{0s}|\geq \frac{\epsilon}{\delta \ell^n}\right)&\leq \frac{\mathbb{E}(|q_{0s}-p_{0s}|)\delta \ell^n}{\epsilon}\\
&=\frac{1}{\ell^n}\frac{\sum_r(|q_{0s}-p_{0s}|)\delta \ell^n}{\epsilon}\\
&=\delta,
\end{align*}
where $r$ is chosen uniformly at random and we have used the fact that $\sum_r(|q_{0s}-p_{0s}|)=\epsilon$. Therefore with probability at least $1-\delta$,
\beq
|\tilde{q}_{s}-p_{0s}|\leq p_{0s}/\text{poly(n)}+\frac{\epsilon(1+1/\text{poly(n)})}{\delta \ell^n}.
\eeq
Finally, $ p_{0s}=\Pr(C_{f,0}\text{ outputs } s)=\Pr(C_{f,s}\text{ outputs } 0)=|A_{f,s}(0)|^2$ and we conclude that, with probability at least $1-\delta$,
\beq
|\tilde{q}_{s}-|A_{f,s}(0)|^2|\leq \frac{\epsilon(1+1/\text{poly(n)})}{\delta \ell^n}+|A_{f,s}(0)|^2/\text{poly(n)}
\eeq
as desired. \qed\\\\

We now prove an anti-concentration result for CV-IQP integrals with finite squeezing and precision. The same statement holds as well for integrals in the ideal case. From the Payley-Zigmund inequality, it holds that 
\beq\label{Payley}
\Pr(|A_f(0)|^2\geq \alpha\mathbb{E}[|A_f(0)|^2])\geq (1-\alpha)^2\frac{\mathbb{E}[|A_f(0)|^2]^2}{\mathbb{E}[|A_f(0)|^4]},
\eeq
where the expectation is taken over all circuits $C_{f,s}$ in the corresponding family. This family can be, for instance, defined as the class of circuits that leads to CV-IQP integrals that approximate weighted sums of \#P-hard problems. Following Ref. \cite{bremner2016average}, we have that
\begin{align*}
\mathbb{E}[|A_f(0)|^2]&=\mathbb{E}_{C_{f,s}}[|A_{f}(0)|^2]\\
&=\mathbb{E}_{s,C_f}[\sum_s|A_f(0)|^2]\\
&=\frac{1}{\ell^n}\mathbb{E}_{C_f}[\sum_s|A_f(0)|^2]=\frac{1}{\ell^n}+o(1)
\end{align*}
where the correction $o(1)$ comes from the fact that the values of $s$ are restricted to the finite integration region. This gives an expression for the numerator in the right-hand side of Eq. \eqref{Payley}. For our purposes, it suffices to upper bound the denominator. We have  
\begin{align*}
\mathbb{E}[|A_f(0)|^4]&=\mathbb{E}_{C_f}[|A_f(0)|^4]\\
&\leq \max_{C_f} |A_f(0)|^4
\end{align*}
so we just need to upper bound $|A_f(0)|$ for all circuits. From Eq. \eqref{A_f}, the integrand is upper bounded in absolute value by 1 and therefore
\begin{align}
|A_f(0)|&=|\frac{1}{(2\pi)^n}\int_{D_L}e^{i f(q)}e^{-is\cdot q}dq^n|\nonumber\\
&\leq \frac{1}{(2\pi)^n}\text{Vol}(D_L)\nonumber\\
&=\left(\frac{L}{\pi}\right)^n
\end{align}
where we used $\text{Vol}(D_L)=(2L)^n$. We then have that
\beq
\frac{\mathbb{E}[|A_f(0)|^2]^2}{\mathbb{E}[|A_f(0)|^4]}\geq\left(\frac{L}{\pi \sqrt{\ell}}\right)^{4n}=\left(\frac{\sqrt{\Delta_p L}}{\pi}\right)^{4n}
\eeq
where we have replaced $\ell=L/\Delta_p$. The expression on the right hand side can be made equal to a constant by fixing $L$, a free parameter of our choosing, appropriately. In that case, by setting $L$ such that
\beq\label{Lcondition}
L=\frac{\pi}{(\Delta_p)^2}
\eeq
for a given $\Delta_p$, we have
\beq
\frac{\mathbb{E}[|A_f(0)|^2]^2}{\mathbb{E}[|A_f(0)|^4]}\geq 1
\eeq
and therefore
\begin{align}\label{anti-con}
&\Pr(|A_f(0)|^2\geq \alpha\mathbb{E}[|A_f(0)|^2])\nonumber\\
=&\Pr(|A_f(0)|^2\geq \alpha \ell^{-n})\geq (1-\alpha)^2.
\end{align}

The condition of Eq. \eqref{Lcondition} together with a large enough value of $L$ for a good approximation to the integral can both be met simultaneously provided that $\Delta_p$ is small enough. Note that even if equality does not hold exactly in Eq. \eqref{Lcondition}, as long as
\beq
\left(\frac{\sqrt{\Delta_p L}}{\pi}\right)=1+O(1/n)
\eeq
we have that
\beq
\left(\frac{\sqrt{\Delta_p L}}{\pi}\right)^{4n}=\left(1+O(1/n)\right)^{4n}=O(1).
\eeq
Thus, from now on we assume that $L$ satisfies Eq. \eqref{Lcondition} and statements hold for integrals over the corresponding hypercube. Integrals with a different value of $L$ will themselves be exponentially good approximations to this one, so the following results would apply for such integrals as well. By setting $\alpha=\frac{1}{2}$, from Eq. \eqref{anti-con}, we conclude that
\beq
\Pr(|A_f(0)|^2\geq \frac12\ell^{-n})\geq \frac{1}{4}
\eeq 
as desired for the anti-concentration result.

\end{document}